\newtheorem{thm}{Theorem}[section]
\newtheorem{lem}[thm]{Lemma}
\newtheorem{cor}[thm]{Corollary}
\newtheorem{defn}[thm]{Definition}
\newtheorem{clm}[thm]{Claim}
\newtheorem{cons}[thm]{Construction}
\newenvironment{theorem}{\begin{thm}\begin{rm}}%
{\end{rm}\end{thm}}
\newenvironment{lemma}{\begin{lem}\begin{rm}}%
{\end{rm}\end{lem}}
{\end{rm}\end{cor}}
{\end{em}\end{defn}}
{\end{rm}\end{clm}}
{\end{em}\end{cons}}
\newcommand{\secref}[1]{\hyperref[#1]{Section \ref{#1}}}
\newcommand{\apref}[1]{\hyperref[#1]{Appendix \ref{#1}}}
\newcommand{\thref}[1]{\hyperref[#1]{Theorem \ref{#1}}}
\newcommand{\defref}[1]{\hyperref[#1]{Definition \ref{#1}}}
\newcommand{\corref}[1]{\hyperref[#1]{Corollary \ref{#1}}}
\newcommand{\lemref}[1]{\hyperref[#1]{Lemma \ref{#1}}}
\newcommand{\clref}[1]{\hyperref[#1]{Claim \ref{#1}}}
\newcommand{\consref}[1]{\hyperref[#1]{Construction \ref{#1}}}
\newcommand{\figref}[1]{\hyperref[#1]{Figure \ref{#1}}}
\newcommand{\eqnref}[1]{\hyperref[#1]{Equation \ref{#1}}}
\newcommand{\BSTCONN}{{\sc{Balanced ST-Connectivity}}}
\title{{\bf{A Note on the Balanced ST-Connectivity}}}
\author{
Shiva Kintali\thanks{Department of Computer Science, Princeton University, Princeton, NJ 08540. Email : {\em{kintali@cs.princeton.edu}}}\and
Asaf Shapira\thanks{School of Mathematics and College of Computing, Georgia Institute of Technology, Atlanta, GA-30332. Email : {\em{asafico@math.gatech.edu}}}
}
\date{}
\begin{document}

\maketitle

\begin{abstract}
We prove that every {\sf YES} instance of \BSTCONN\ \cite{kintali-real-nlvsl} has a balanced path of polynomial length.
\end{abstract}


\section{Introduction}

Kintali \cite{kintali-real-nlvsl} introduced new kind of connectivity problems called {\em{graph realizability problems}}, motivated by the study of AuxPDAs \cite{cook-auxpda}. In this paper, we study one such graph realizability problem called \BSTCONN\ and prove that every {\sf YES} instance of \BSTCONN\ has a balanced path of polynomial length.

Let $\mathcal{G}(V,E)$ be a directed graph and let $n = |V|$. Let $\mathcal{G'}(V,E')$ be the underlying undirected graph of $\mathcal{G}$. Let $P$ be a path in $\mathcal{G'}$. Let $e = (u,v)$ be an edge along the path $P$. Edge $e$ is called {\em{neutral}} edge if both $(u,v)$ and $(v,u)$ are in $E$. Edge $e$ is called {\em{forward}} edge if $(u,v) \in E$ and $(v,u) \notin E$. Edge $e$ is called {\em{backward}} edge if $(u,v) \notin E$ and $(v,u) \in E$.

A path (say $P$) from $s \in V$ to $t \in V$ in $\mathcal{G'}(V,E')$ is called {\em{balanced}} if the number of forward edges along $P$ is equal to the number of backward edges along $P$. A balanced path might have any number of neutral edges. By definition, if there is a balanced path from $s$ to $t$ then there is a balanced path from $t$ to $s$.

\begin{framed}
\noindent \BSTCONN\ : Given a directed graph $\mathcal{G}(V,E)$ and two distinguished nodes $s$ and $t$, decide if there is {\em{balanced path}} between $s$ and $t$.
\end{framed}

A balanced path may not be a simple path. The example in Figure \ref{fig:bstconn-n2} shows an instance of \BSTCONN\ where the {\em{only}} balanced path between $s$ and $t$ is of length $\Theta(n^2)$. The directed simple path from $s$ to $t$ is of length $n/2$. There is a cycle of length $n/2$ at the vertex $v$. All the edges (except $(v,u)$) in this cycle are undirected. The balanced path from $s$ to $t$ is obtained by traversing from $s$ to $v$, traversing the cycle clockwise for $n/2$ times and then traversing from $v$ to $t$.

\clearpage

\begin{figure}[htp]
\begin{center}
\includegraphics[width=4in]{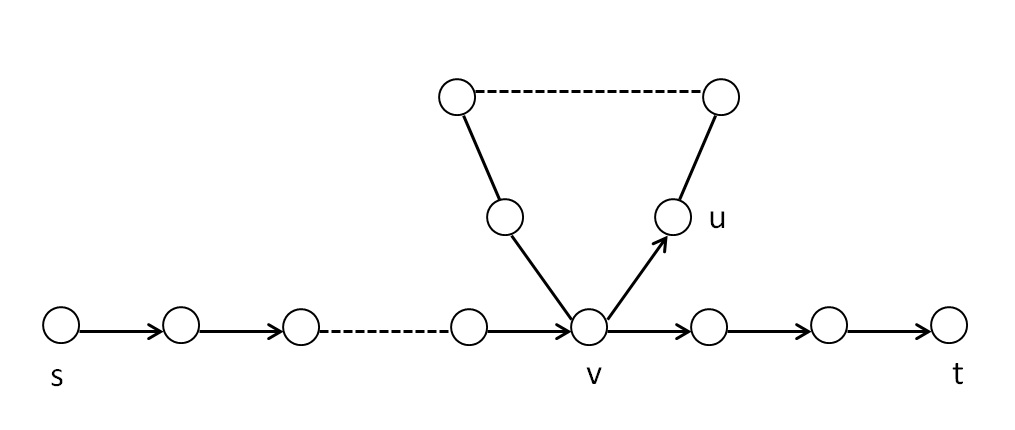}
\end{center}
\caption{A non-simple balanced path of length $\Theta(n^2)$ from $s$ to $t$}
\label{fig:bstconn-n2}
\end{figure}

\section{Length of Balanced Paths}

We now prove that every {\sf YES} instance of \BSTCONN\ has a balanced path of polynomial length. We need the following lemma.

\begin{lemma}\label{lem:coeffbound}
Let $c_1 < c_2 < \dots < c_r \in [n]$ and $k \in [n]$. If $m_1, m_2, \dots, m_r$ are integers such that
\begin{center}
$m_1c_1 + m_2c_2 + \dots + m_rc_r = k$,
\end{center}
then there exist integers $m'_1, m'_2, \dots,m'_r$ satisfying
\begin{center}
$m'_1c_1 + m'_2c_2 + \dots + m'_rc_r = k$
\end{center}
such that $|m'_1|+|m'_2|+\dots+|m'_r| \leq O(nr)$.
\end{lemma}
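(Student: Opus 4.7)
My plan is to start from the given representation $(m_1, \ldots, m_r)$ and modify it using a single elementary operation that preserves the value of the linear combination while driving the first $r-1$ coordinates into a small window.

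The key observation is that, for any index $i < r$ and any integer $q$, the substitution
\[
m_i \;\longleftarrow\; m_i - q\, c_r, \qquad m_r \;\longleftarrow\; m_r + q\, c_i,
\]
leaves $\sum_j m_j c_j$ unchanged, because $-q c_r \cdot c_i + q c_i \cdot c_r = 0$. I would apply this operation once for each $i = 1, 2, \ldots, r-1$, in order, choosing $q$ to be the integer closest to $m_i/c_r$. This replaces $m_i$ with its smallest-absolute-value representative modulo $c_r$, so afterwards $|m_i| \le c_r/2$. Because the operation at index $i$ touches only the pair $(m_i, m_r)$, processing the indices sequentially does not disturb the bound established for earlier coordinates; only $m_r$ accumulates adjustments.

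Let $(m'_1, \ldots, m'_r)$ denote the resulting tuple. We still have $\sum_j m'_j c_j = k$ and $|m'_i| \le c_r/2 \le n/2$ for every $i < r$. To bound $|m'_r|$ I would rearrange the equation:
\[
|m'_r|\, c_r \;\le\; |k| + \sum_{i<r} |m'_i|\, c_i \;\le\; n + (r-1)\cdot \tfrac{c_r}{2} \cdot c_r,
\]
using $|k| \le n$ and $c_i \le c_r$. Dividing by $c_r \ge 1$ and applying $c_r \le n$ yields $|m'_r| \le n + (r-1)n/2$. Adding the bound $\sum_{i<r} |m'_i| \le (r-1) c_r / 2 \le (r-1)n/2$ then gives $\sum_i |m'_i| = O(nr)$, as required. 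I do not anticipate any serious obstacle: the only step worth spelling out carefully is the non-interference of the successive swaps, and that is immediate from the observation that each swap modifies only $(m_i, m_r)$, so the reductions on $m_1, \ldots, m_{r-1}$ can be carried out independently.
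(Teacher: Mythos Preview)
Your proposal is correct and follows essentially the same route as the paper: both reduce each $m_i$ (for $i<r$) modulo $c_r$ via the value-preserving swap $(m_i,m_r)\mapsto(m_i-qc_r,\;m_r+qc_i)$, then bound $|m'_r|$ by solving the equation for $m'_r c_r$. The only cosmetic difference is that you take $q$ to be the nearest integer to $m_i/c_r$ (giving $|m'_i|\le c_r/2$) while the paper uses the floor (giving $|b_i|<c_r$), which affects only the implicit constant in $O(nr)$.
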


\begin{proof}
Let $a_{i} = \lfloor{\frac{m_{i}}{c_{r}}}\rfloor$ and $m_i = a_ic_r+b_i$ for $1 \leq i \leq r-1$. We have,
\begin{center}
$(a_1c_r+b_1)c_1 + (a_2c_r+b_2)c_2 + \dots + (a_{r-1}c_r+b_{r-1})c_{r-1}+ m_rc_r = k$
\end{center}
Rearranging we get,
\begin{center}
$b_1c_1 + b_2c_2 + \dots + b_{r-1}c_{r-1} + (m_r+a_1c_1+a_2c_2+\dots+a_{r-1}c_{r-1})c_r = k$.
\end{center}
Note that $|b_i| < c_r < n$ for $1 \leq i \leq r-1$. Hence, $b_1c_1 + b_2c_2 + \dots + b_{r-1}c_{r-1} - k = O(n\sum_{i=1}^{r-1}{c_{i}})$. Hence, $m_r+a_1c_1+a_2c_2+\dots+a_{r-1}c_{r-1} = O(n\sum_{i=1}^{r-1}{c_{i}})/c_{r} = O(nr)$.

Setting $m'_i = b_i$ for $1 \leq i \leq r-1$ and $m'_{r} = m_r+a_1c_1+a_2c_2+\dots+a_{r-1}c_{r-1}$ we get the desired result.
\end{proof}

\begin{theorem}\label{thm:balanced-n2}
Let $G(V,E)$ be a directed graph with two distinguished vertices $s,t \in V$ and let $P$ be a balanced path from $s$ to $t$. Then there exists a balanced path $Q$ from $s$ to $t$ such that the length of $Q$ is $O(n^3)$.
\end{theorem}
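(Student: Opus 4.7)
The plan is to replace the given (possibly very long) balanced path $P$ by a short simple $s$-$t$ path in the underlying undirected graph $G'$ and then \emph{correct its imbalance} by attaching a bounded number of cycle traversals, using \lemref{lem:coeffbound} to control the total cycle multiplicity. Begin by fixing any simple $s$-$t$ path $P_0$ in $G'$; such a path exists because $P$ witnesses connectivity of $s$ and $t$ in $G'$, and its length is at most $n-1$. Writing $F(\cdot)$ and $B(\cdot)$ for the number of forward and backward edges along a walk, the bias $k := F(P_0)-B(P_0)$ of $P_0$ satisfies $|k|\le n$. For every simple cycle $C$ of $G'$ lying in the connected component of $s$, fix an arbitrary orientation and define $\delta(C) := F(C)-B(C)\in[-n,n]$; let $c_1<c_2<\cdots<c_r$ be the distinct positive values among the $|\delta(C)|$'s, so $r\le n$.

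The crux is to show that $-k$ is an integer linear combination of $c_1,\dots,c_r$. Consider the closed walk $W := P\cdot\mathrm{rev}(P_0)$ based at $s$, whose bias equals $-k$ because $P$ is balanced. Viewing $W$ as an Eulerian circuit in the multi-digraph obtained from $G'$ by splitting each undirected edge into two oppositely directed arcs, decompose $W$ into edge-disjoint simple directed cycles. Each such cycle is either a $2$-cycle that traverses a single undirected edge once in each direction (contributing $0$ to the bias, for each of the three edge types: forward, backward, neutral) or else traces a genuine simple cycle of $G'$ in some orientation (contributing $\pm c_i$ for some $i$). Summing over the decomposition yields $-k = \sum_i m_i c_i$ for integers $m_i$, and then \lemref{lem:coeffbound} provides integers $m'_1,\dots,m'_r$ satisfying the same equation with $\sum_i|m'_i| = O(nr) = O(n^2)$.

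To build $Q$, for each $i$ fix a witnessing simple cycle $C_i$ of $G'$ with $|\delta(C_i)| = c_i$ and a path $R_i$ in $G'$ from $s$ to some vertex of $C_i$ (which exists by connectivity). The path $Q$ performs, for $i=1,\dots,r$, a detour from $s$ along $R_i$, then $|m'_i|$ traversals of $C_i$ in whichever orientation gives bias of the correct sign, then a return along $R_i$; finally it follows $P_0$ from $s$ to $t$. Each ``there and back'' along $R_i$ traverses every edge once in each direction and hence contributes $0$ to the bias. The cycle traversals contribute $\sum_i m'_i c_i = -k$, and $P_0$ contributes $k$, so $Q$ is balanced. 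Its length is at most $|P_0| + \sum_i\bigl(2|R_i| + |m'_i|\cdot|C_i|\bigr) \le n + 2nr + n\sum_i|m'_i| = O(n^3)$.

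I expect the trickiest point to be the Eulerian decomposition in the middle paragraph: one must verify that the directed cycles arising from the decomposition of $W$ split cleanly into degenerate $2$-cycles (whose bias truly vanishes, because traversing any edge once in each direction cancels the contributions regardless of whether it is forward, backward, or neutral) and genuine simple cycles of $G'$. Once this reduction to an integer equation is in hand, the rest is a straightforward application of \lemref{lem:coeffbound} combined with the ``there and back'' trick for attaching cycles to $P_0$ without introducing extra bias.
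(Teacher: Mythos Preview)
Your proof is correct and follows essentially the same route as the paper: reduce to a simple $s$--$t$ path plus a collection of cycles, express the path's imbalance as an integer combination of cycle contributions, invoke \lemref{lem:coeffbound} to bound the multiplicities, and reassemble using back-and-forth connector paths that add zero net bias. The only notable differences are that you work with the cycle \emph{biases} $\delta(C)=F(C)-B(C)$ rather than with cycle lengths as the paper writes, and that you obtain the integer relation via an explicit Eulerian decomposition of the closed walk $P\cdot\mathrm{rev}(P_0)$ instead of directly peeling cycles off $P$; both are minor technical variations on the same idea, and your use of biases is in fact the quantity the balance condition actually constrains.
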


\begin{proof}

We decompose $P$ into a simple path (say $P'$) from $s$ to $t$ and a set of cycles $\mathcal{C} = \{C_1,C_2,\dots,C_l\}$. Let $c_1,\dots,c_r$ be the distinct lengths of the cycles in $\mathcal{C}$. Let $-k$ denote the number of forward edges minus the number of backward edges along $P'$ from $s$ to $t$. Since there is a balanced path from $s$ to $t$ using the path $P'$ and the cycles from $\mathcal{C}$, there exist integers $m_1,\dots,m_r$ satisfying $m_1c_1+\dots+m_rc_r=k$. Applying Lemma \ref{lem:coeffbound} there exist integers $m'_1,\dots,m'_r$ satisfying $m'_1c_1+\dots+m'_rc_r=k$ such that $|m'_1|+\dots+|m'_r| \leq O(nr)$.

We now construct a balanced path $Q$ from $s$ to $t$ as follows : For every $m'_i$ we walk $m'_i$ times around the cycle of length $c_i$ (if there are several cycles of this length, we choose one of them arbitrarily). Note that these cycles may not be connected to each other. We now choose an arbitrary vertex from each cycle and connect it to $t$ by simple paths (say $P_1, P_2, \dots, P_r$).

The new balanced path $Q$ starts from $s$ and follows the simple path $P'$ from $s$ to $t$ and uses $P_i$ to reach the cycle of length $c_i$ and walks around it $m'_i$ times and comes back to $t$ using $P_{i}$. This is repeated for $1 \leq i \leq r$. Since each $P_i$ is used once while going away from $t$ and once while coming back to $t$, the paths $P_1, P_2, \dots, P_r$ do not modify the balancedness of the path $Q$. The combined length of paths $P_1, P_2, \dots, P_r$ is $O(nr)$. Since $|m'_1|+\dots+|m'_r| \leq O(nr)$ the overall length of the balanced path $Q$ is $O(nr) = O(n^{2})$.
\end{proof}

\bibliographystyle{alpha}
\bibliography{bib-balanced}

\end{document}